\newcommand{\comm}[1]{}
\def\calL{\mathcal{L}}
\def\calH{\mathcal{H}}
\def\opt{S_{\text{opt}}}
\def\optvalue{\delta_{\text{opt}}}
\def\findmin{FindMinCost}
\def\reset{ResetCost}
\newtheorem{observation}{Observation}
\newtheorem{invariant}{Algorithm Invariant}
\begin{document}

\title{On Line-Separable Weighted Unit-Disk Coverage and Related Problems\thanks{A preliminary version of this paper will appear in {\em Proceedings of the 49th International Symposium on Mathematical Foundations of Computer Science (MFCS 2024)}. This research was supported in part by NSF under Grant CCF-2300356.}}
\author{Gang Liu 
\and
Haitao Wang 
}

 \institute{
 Kahlert School of Computing\\
  University of Utah, Salt Lake City, UT 84112, USA\\
  \email{u0866264@utah.edu, haitao.wang@utah.edu}
}

\maketitle

\pagestyle{plain}
\pagenumbering{arabic}
\setcounter{page}{1}

\begin{abstract}
Given a set $P$ of $n$ points and a set $S$ of $n$ weighted disks in the plane, the disk coverage problem is to compute a subset of disks of smallest total weight such that the union of the disks in the subset covers all points of $P$. The problem is NP-hard. In this paper, we consider a line-separable unit-disk version of the problem where all disks have the same radius and their centers are separated from the points of $P$ by a line $\ell$. We present an $O(n^{3/2}\log^2 n)$ time algorithm for the problem. This improves the previously best work of $O(n^2\log n)$ time. Our result leads to an algorithm of $O(n^{{7}/{2}}\log^2 n)$ time for the halfplane coverage problem (i.e., using $n$ weighted halfplanes to cover $n$ points), an improvement over the previous $O(n^4\log n)$ time solution. If all halfplanes are lower ones, our algorithm runs in $O(n^{{3}/{2}}\log^2 n)$ time, while the previous best algorithm takes $O(n^2\log n)$ time. Using duality, the hitting set problems under the same settings can be solved with similar time complexities.
\end{abstract}

\keywords{Line-separable, unit disks, halfplanes, geometric coverage, geometric hitting set}

\section{Introduction}
\label{sec:intro}

Let $P$ be a set of points, and $S$ a set of disks in the plane such that each disk has a positive weight. The {\em disk coverage} problem asks for a subset of disks whose union covers all points and the total weight of the disks in the subset is minimized. The problem is NP-hard, even if all disks have the same radius and all disks have the same weight~\cite{ref:FederOp88,ref:MustafaIm10}. Polynomial-time approximation algorithms have been proposed for the problem and many of its variants, e.g.,~\cite{ref:AgarwalNe20,ref:BusPr18,ref:ChanEx14,ref:ChanFa20,ref:GanjugunteGe11,ref:LiA15}.

In this paper, we consider a {\em line-separable unit-disk} version of the problem where all disks have the same radius and their centers are separated from the points of $P$ by a line $\ell$ (see Fig.~\ref{fig:unitcase}). 
This version of the problem has been studied before. For the {\em unweighted case}, that is, all disks have the same weight, 
Amb\"uhl, Erlebach, Mihal\'ak, and Nunkesser~\cite{ref:AmbuhlCo06} first solved the problem in $O(m^2n)$ time, where $n=|P|$ and $m=|S|$.
An improved $O(nm+ n\log n )$ time algorithm was later given in \cite{ref:ClaudeAn10}. Liu and Wang~\cite{ref:LiuOn23} recently presented an $O((n+m)\log(n+m)) $ time algorithm.\footnote{The runtime of the algorithm in the conference paper of \cite{ref:LiuOn23} was $m^{2/3}n^{2/3}2^{O(\log^*(m+n))}+O((n+m)\log (n+m))$, which has been improved to $O((n+m)\log (n+m))$ time in the latest arXiv version.}
For the weighted case, Pederson and Wang~\cite{ref:PedersenAl22} derived an algorithm of $O(nm\log (m+n))$ time or $O((m+n)\log(m+n) + \kappa\log m)$ time, where $\kappa$ is the number of pairs of disks that intersect and $\kappa=O(m^2)$ in the worst case.
In this paper, we propose an algorithm of $O(n\sqrt{m}\log^2 m +(m+n)\log(m+n))$ time for the weighted case. In addition to the improvement over the previous work, perhaps theoretically more interesting is that the runtime of our algorithm is subquadratic.

\begin{figure}[h]
\begin{minipage}[t]{\textwidth}
\begin{center}
\includegraphics[height=1.0in]{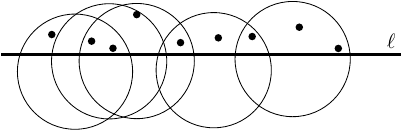}
\caption{\footnotesize Illustrating the line-separable unit-disk case: All points of $P$ are above $\ell$ while the centers of all disks are below $\ell$.}
\label{fig:unitcase}
\end{center}
\end{minipage}
\vspace{-0.15in}
\end{figure}


\paragraph{\bf The halfplane coverage problem.} If every disk of $S$ is a halfplane, then the problem becomes the {\em halfplane coverage problem}. To solve the problem, Pedersen and Wang~\cite{ref:PedersenAl22} showed that the problem can be reduced to $O(n^2)$ instances of the {\em lower-only halfplane coverage problem} in which all halfplanes are lower halfplanes; this reduction works for both the unweighted and the weighted cases. Consequently, if the lower-only problem can be solved in $O(T)$ time, then the general problem (i.e., $S$ has both lower and upper halfplanes) is solvable in $O(n^2\cdot T)$ time.

For the weighted lower-only problem,  Chan and Grant~\cite{ref:ChanEx14} first gave an algorithm that runs in $O((m+n)^4)$ time. As observed in \cite{ref:PedersenAl22}, the lower-only halfplane coverage problem is actually a special case of the line-separable unit-disk coverage problem. Indeed, let $\ell$ be a horizontal line below all the points of $P$. Then, since each halfplane of $S$ is a lower halfplane, it can be considered a disk of infinite radius with center below $\ell$. In this way, the lower-only halfplane coverage problem becomes an instance of the line-separable unit-disk coverage problem. As such, with their algorithm for the weighted line-separable unit-disk coverage problem, Pederson and Wang~\cite{ref:PedersenAl22} solved the weighted lower-only halfplane coverage problem in $O(nm+ n\log n)$ time or $O((m+n)\log(m+n) + m^2\log m)$ time. Using our new algorithm for the weighted line-separable unit-disk coverage problem, the weighted lower-only halfplane coverage problem can now be solved in $O(n\sqrt{m}\log^2 m +(m+n)\log(m+n))$ time.

The unweighted lower-only halfplane coverage problem can be solved faster. Indeed, since the problem is a special case of the unweighted line-separable unit-disk coverage problem, with the $O((n+m)\log (n+m))$ time algorithm of Liu and Wang~\cite{ref:LiuOn23} for the latter problem, the unweighted lower-only halfplane coverage problem is solvable in $O((n+m)\log (n+m))$ time. Wang and Xue~\cite{ref:WangAl24} derived another $O((n+m)\log (n+m))$ time algorithm for the unweighted lower-only halfplane coverage problem with a different approach (which does not work for the unit-disk problem); they also solved the general unweighted halfplane coverage problem in $O(n^{4/3}\log^{5/3}n\log^{O(1)}\log n)$ time. In addition, by a reduction from the set equality problem~\cite{ref:Ben-OrLo83}, a lower bound of $\Omega((n+m)\log (n+m))$ is proved in \cite{ref:WangAl24} for the lower-only halfplane coverage problem under the algebraic decision tree model.

\paragraph{\bf The hitting set problem.} A related problem is the hitting set problem in which each point of $P$ has a positive weight and we seek to find a subset of points with minimum total weight so that each disk of $S$ contains at least one point of the subset.
Since the disks of $S$ have the same radius, the problem is actually ``dual'' to the disk coverage problem.
More specifically, if we consider the set of unit disks centered at the points of $P$ as a set of ``dual disks'' and consider the centers of the disks of $S$ as a set of ``dual points'', then the hitting set problem on $P$ and $S$ is equivalent to finding a minimum weight subset of dual disks whose union covers all dual points. Consequently, applying our new weighted case line-separable unit-disk coverage algorithm in this paper solves the weighted line-separable unit-disk hitting set problem in $O(m\sqrt{n}\log^2 n +(m+n)\log(m+n))$ time; applying the $O((m+n)\log(m+n))$ time algorithm in \cite{ref:LiuOn23} for the unweighted line-separable unit-disk coverage algorithm solves the unweighted line-separable unit-disk hitting set problem in $O((m+n)\log(m+n))$ time.

If every disk of $S$ is a halfplane, then the problem becomes the {\em halfplane hitting set} problem.  Har-Peled and Lee \cite{ref:Har-PeledWe12} first solved the weighted problem in $O((m+n)^6)$ time. Liu and Wang \cite{ref:LiuGe23} showed that the problem can be reduced to $O(n^2)$ instances of the {\em lower-only halfplane hitting set problem} in which all halfplanes are lower halfplanes; this reduction works for both the unweighted and the weighted cases. Consequently, if the lower-only problem can be solved in $O(T)$ time, then the general problem can be solved in $O(n^2\cdot T)$ time. For the lower-only problem, as in the coverage problem, it is a special case of the line-separable unit-disk hitting set problem; consequently,  the weighted and unweighted cases can be solved in $O(m\sqrt{n}\log^2 n +(m+n)\log(m+n))$ time using our new algorithm in this paper and $O((m+n)\log(m+n))$ time using the algorithm in \cite{ref:LiuOn23}, respectively.

\paragraph{\bf Other related work.}
Pedersen and Wang~\cite{ref:PedersenAl22} actually considered a {\em line-constrained} disk coverage problem, where disk centers are on the $x$ -axis while the points of $P$ can be anywhere in the plane, but the disks may have different radii. They solved the weighted case in $O(nm+ n\log n)$ time or $O((m+n)\log(m+n) + \kappa\log m)$ time, where $\kappa$ is the number of pairs of disks that intersect. For the unweighted case, Liu and Wang~\cite{ref:LiuOn23} gave an algorithm of $O((n + m) \log(m + n) + m \log m \log n)$ time.
The line-constrained disk hitting set problem was also studied by Liu and Wang~\cite{ref:LiuGe23}, where an $O((m+n)\log(m+n) + \kappa\log m)$ time algorithm was derived for the weighted case, matching the time complexity of the above line-constrained disk coverage problem. Other types of line-constrained problems have also been considered in the literature, e.g.,\cite{ref:AltMi06,ref:BiloGe05,ref:BiniazFa18,ref:Lev-TovPo05,ref:KarmakarSo13,ref:PedersenOn18,ref:WangLi16}.

\paragraph{\bf Our approach.}
Our algorithm for the weighted line-separable unit-disk coverage problem is essentially a dynamic program. The algorithm description is quite simple and elegant. However, it is not straightforward to prove its correctness. To this end, we show that our algorithm is consistent with the algorithm in \cite{ref:PedersenAl22} for the same problem; one may view our algorithm as a different implementation of the algorithm in \cite{ref:PedersenAl22}. Another challenge of our approach lies in its implementation. More specifically, our algorithm has two key operations, and the efficiency of the algorithm hinges on how to perform these operations efficiently. For this, we construct a data structure based on building a cutting on the disks of $S$~\cite{ref:ChazelleCu93}. Although we do not have a good upper bound on the runtime of each individual operation of the algorithm, we manage to bound the total time of all operations in the algorithm by $O(n\sqrt{m}\log^2 m +(m+n)\log(m+n))$. 

\paragraph{\bf Outlines.} The rest of the paper is structured as follows. After introducing some notation in Section~\ref{sec:pre}, we describe our algorithm and prove its correctness in Section~\ref{sec:algo}. The implementation of the algorithm is presented in Section~\ref{sec:implementation}. 

\section{Preliminaries}
\label{sec:pre}
We follow the notation defined in Section~\ref{sec:intro}, e.g., $P$, $S$, $m$, $n$, $\ell$. All disks of $S$ have the same radius, which we call {\em unit disks}. 
Without loss of generality, we assume that $\ell$ is the $x$-axis and all points of $P$ are above $\ell$ while all centers of disks of $S$ are below $\ell$. Note that when we say that a point is above (or below) $\ell$, we allow the case where the point is on $\ell$. 


We assume that each point of $P$ is covered by at least one disk since otherwise there would be no solution.
Our algorithm can check whether this assumption is met.
For ease of discussion, we make a general position assumption that no point of $P$ lies on the boundary of a disk and no two points of $P$ have the same $x$-coordinate. 



We call a subset $S'$ of $S$ a {\em feasible subset} if the union of all disks of $S'$ covers all points of $P$.
If $S'$ is a feasible subset of minimum total weight, then $S'$ is called an {\em optimal subset}. 
Let $\optvalue$ denote the total weight of all disks in an optimal subset; we call $\optvalue$ the {\em optimal objective value}. 



For any point $q$ in the plane, let $S_q\subseteq S$ denote the subset of disks containing $q$; define $\overline{S_q}=S\setminus S_q$. 
For each disk $s\in S$, let $w(s)$ denote its weight. 

\section{Algorithm description and correctness}
\label{sec:algo}
We now present our algorithm. As mentioned above, the algorithm description is quite simple. The challenging part is to prove its correctness and implement it efficiently. In the following, we first describe the algorithm in Section~\ref{sec:description}, and then establish its correctness in Section~\ref{sec:correct}. The algorithm implementation will be elaborated in Section~\ref{sec:implementation}.  

\subsection{Algorithm description}
\label{sec:description}

We first sort the points of $P$ from left to right as $p_1,p_2,\ldots, p_n$.
Our algorithm then processes the points of $P$ in this order. For each point $p_i\in P$, the algorithm computes a value $\delta_i$. The algorithm also maintains a value $cost(s)$ for each disk $s\in S$, which is initialized to its weight $w(s)$. The pseudocode of the algorithm is given in Algorithm~\ref{algo:10}. 

\vspace{-0.2in}
\begin{algorithm}[h]
	\caption{}
	\label{algo:10}
	\SetAlgoNoLine
	\KwIn{The points of $P$ are sorted from left to right as $p_1,p_2,\ldots,p_n$ 
 }
	\KwOut{The optimal objective value $\optvalue$} \BlankLine
	$cost(s)\leftarrow w(s)$, for all disks $s\in S$\;
	\For{$i\leftarrow 1$ \KwTo $n$}
	{
        $\delta_i\leftarrow \min_{s\in S_{p_i}} cost(s)$\tcp*[r]{\findmin\ Operation}         
        \label{ln:findmin}
        $cost(s)\leftarrow w(s) + \delta_i$ for all disks $s\in \overline{S_{p_i}}$\tcp*[r]{\reset\ Operation} 
        \label{ln:reset}
	}
    \Return $\delta_n$\; 
\end{algorithm}
\vspace{-0.2in}

The algorithm is essentially a dynamic program. 
We prove in Section~\ref{sec:correct} that the value $\delta_n$ returned by the algorithm is equal to $\optvalue$, the optimal objective value. To find an optimal subset, one could slightly modify the algorithm following the standard dynamic programming backtracking technique. More specifically, if $\delta_n$ is equal to $cost(s)$ for some disk $s\in S_{p_n}$, then $s$ should be reported as a disk in the optimal subset. Suppose that $cost(s)$ is equal to $w(s)+\delta_i$ for some point $p_i$. Then $\delta_i$ is equal to $cost(s')$ for some disk $s'\in S_{p_i}$ and $s'$ should be reported as a disk in the optimal subset. We continue this backtracking process until a disk whose cost is equal to its own weight is reported (in which case all points of $P$ are covered by the reported disks). 

For reference purposes, we use \findmin\ to refer to the operation in Line~\ref{ln:findmin} and use \reset\ to refer to the operation in Line~\ref{ln:reset} of Algorithm~\ref{algo:10}. The efficiency of the algorithm hinges on how to implement these two {\em key operations}, which will be discussed in Section~\ref{sec:implementation}.

\subsection{Algorithm correctness}
\label{sec:correct}
We now prove that Algorithm~\ref{algo:20} is correct, i.e., prove $\delta_n=\optvalue$. To this end, we show that our algorithm is consistent with the algorithm of Pederson and Wang~\cite{ref:PedersenAl22} for the same problem, or alternatively, our algorithm provides a different implementation of their algorithm. Their algorithm first reduces the problem to a 1D interval coverage problem and then solves the interval coverage problem by a dynamic programming algorithm. In the following, we first review their problem reduction and then explain their dynamic programming algorithm. Finally, we show that our algorithm is essentially an implementation of their dynamic programming algorithm. 

\subsubsection{Reducing the problem to an interval coverage problem}

For convenience, let $p_0$ (resp., $p_{n+1}$) be a point to the left (resp., right) all the points of $P$ and is not contained in any disk of $S$.

Consider a disk $s\in S$. We say that a subsequence $P[i,j]$ of $P$ with $1\leq i\leq j\leq n$ is a {\em maximal subsequence covered} by $s$ if all points of $P[i,j]$ are covered by $s$ but neither $p_{i-1}$ nor $p_{j+1}$ is (it is well defined due to $p_0$ and $p_{n+1}$).
Define $F(s)$ as the set of all maximal subsequences covered by $s$. It is easy to see that the subsequences of $F(s)$ are pairwise disjoint.


For each point $p_i$ of $P$, we vertically project it on the $x$-axis $\ell$; let $p_i^*$ denote the projection point. Let $P^*$ denote the set of all such projection points. 
Due to our general position assumption that no two points of $P$ have the same $x$-coordinate, all points of $P^*$ are distinct. 
For any $1\leq i\leq j\leq n$, we use $P^*[i,j]$ to denote the subsequence $p^*_i,p^*_{i+1},\ldots,p^*_j$.

Next, we define a set $S^*$ of line segments on $\ell$ as follows. For each disk $s\in S$ and each
subsequence $P[i,j]\in F(s)$, we create a segment for $S^*$, denoted
by $s^*[i,j]$, with the left endpoint at $p_i^*$ and the right endpoint at $p^*_j$. As such,
$s^*[i,j]$ covers all points of $P^*[i,j]$ and does not cover any point of $P^*\setminus P^*[i,j]$.
We let the weight of $s^*[i,j]$ be equal to $w(s)$. 
We say that $s^*[i,j]$ is {\em defined} by the disk $s$.

Consider the following {\em interval coverage} problem (i.e., each segment of $S^*$ can also be considered an interval of $\ell$): Find a subset of segments of $S^*$ of minimum total weight such that the union of the segments in the subset covers all points of $P^*$. Pederson and Wang~\cite{ref:PedersenAl22} proved that an optimal solution to this interval coverage problem corresponds to an optimal solution to the original disk coverage problem on $P$ and $S$. More specifically, suppose that $\opt^*$ is an optimal subset of the interval coverage problem. Then, we can obtain an optimal subset $\opt$ for the disk coverage problem as follows: For each segment $s^*[i,j]\in \opt^*$, we add the disk that defines $s^*[i,j]$ to $\opt$. It is proved in \cite{ref:PedersenAl22} that $\opt$ thus obtained is an optimal subset of the disk coverage problem. Note that since a disk of $S$ may define multiple
segments of $S^*$, a potential issue with $\opt$ is that a disk may be included in $\opt$ multiple times (i.e., if multiple segments defined by the disk are in $\opt^*$); but this is proved impossible~\cite{ref:PedersenAl22}.

\subsubsection{Solving the interval coverage problem}
We now explain the dynamic programming algorithm in \cite{ref:PedersenAl22} for the interval coverage problem. 

Let $p_0^*$ be the vertical projection of $p_0$ on $\ell$. Note that $p_0^*,p_1^*,\ldots,p_{n}^*$ are sorted on $\ell$ from left to right. For each segment $s^*\in S^*$, let $w(s^*)$ denote the weight of $s^*$.  

For each segment $s^*\in S^*$, define $f_{s^*}$ as the index of the rightmost point of $P^*\cup\{p_0^*\}$ strictly to the left of the left endpoint of $s^*$. 
Note that $f_{s^*}$ is well defined due to $p_0^*$.


For each $i\in [1,n]$, define $\delta_i^*$ as the minimum total weight of a subset of segments of $S^*$ whose union covers all points of $P^*[1,i]$. The goal of the interval coverage problem is to compute $\delta_n^*$, which is equal to $\optvalue$~\cite{ref:PedersenAl22}. For convenience, we let $\delta^*_0=0$.
For each segment $s^*\in S^*$, define $cost(s^*)=w(s^*)+\delta^*_{f_{s^*}}$. 
One can verify that $\delta_i^*=\min_{s^*\in S^*_{p_i^*}}cost(s^*)$, where $S^*_{p_i^*}\subseteq S^*$ is the subset of segments that cover $p_i^*$. This is the recursive relation of the dynamic programming algorithm.

Assuming that the indices $f_{s^*}$ for all disks $s^*\in S^*$ have been computed, the algorithm works as follows. We sweep a point $q$ on $\ell$ from left to right. Initially, $q$ is at $p_0^*$. During the sweeping, we maintain the subset $S^*_q\subseteq S^*$ of segments that cover $q$. The algorithm maintains the invariant that the cost of each segment of $S^*_q$ is already known and the values $\delta_i^*$ for all points $p_i^*\in P^*$ to the left of $q$ have been computed. An event happens when $q$ encounters an endpoint of a segment of $S^*$ or a point of $P^*$. 
If $q$ encounters a point $p_i^*\in P^*$, then we find the segment of $S^*_q$ with the minimum cost and assign the cost value to $\delta_i^*$. If $q$ encounters the left endpoint of a segment $s^*$, we set $cost(s^*)=w(s^*)+\delta^*_{f_{s^*}}$ and then insert $s^*$ into $S^*_q$. If $q$ encounters the right endpoint of a segment, we remove the segment from $S^*_q$. The algorithm finishes once $q$ meets $p_n^*$, at which event $\delta_n^*$ is computed. 


\paragraph{\bf Remark.} 
It was shown in~\cite{ref:PedersenAl22} that the above dynamic programming algorithm for the interval coverage problem can be implemented in $O((|P^*|+|S^*|)\log (|P^*|+|S^*|))$. While $|P^*|=n$, $|S^*|$ may be relatively large. A straightforward upper bound for $|S^*|$ is $O(nm)$. Pederson and Wang~\cite{ref:PedersenAl22} proved another bound $|S^*|=O(n+m+\kappa)$, where $\kappa$ is the number of pairs of disks that intersect. This leads to their algorithm of $O(nm\log (m+n)+ n\log n)$ time or $O((m+n)\log(m+n) + \kappa\log m)$ time for the original disk coverage problem on $P$ and $S$. 

\subsubsection{Correctness of our algorithm}

Next, we argue that $\delta_n=\delta_n^*$, which will establish the correctness of Algorithm~\ref{algo:10} since $\delta^*_n=\optvalue$. In fact, we will show that $\delta_i=\delta_i^*$ for all $1\leq i\leq n$. We prove it by induction. 

As the base case, we first argue $\delta_1=\delta_1^*$ . To see this, by definition, $\delta_1=\min_{s\in S_{p_1}} w(s)$ because $cost(s)=w(s)$ initially for all disks $s\in S$. For $\delta_1^*$, notice that $f_{s^*}=0$ for every segment $s^*\in S^*_{p^*_1}$. Since $\delta^*_0=0$, we have $\delta_1^*= \min_{s^*\in S^*_{p_1^*}} w(s^*)$. By definition, a segment $s^*\in S^*$ covers $p_1^*$ only if the disk of $S$ defining $s^*$ covers $p_1$, and the segment has the same weight as the disk. Therefore, $s^*$ is in $S^*_{p_1^*}$ only if the disk of $S$ defining $s^*$ is in $S_{p_1}$. On the other hand, if a disk $s$ covers $p_1$, then $s$ must define exactly one segment in $S^*$ covering $p_1^*$. Hence, for each disk $s\in S_{p_1}$, it defines exactly one segment in $S^*_{p_1^*}$ with the same weight. This implies that $\delta_1^*$ is equal to the minimum weight of all disks of $S$ covering $p_1$, and therefore, $\delta_1^*=\delta_1$ must hold. 

Consider any $i$ with $2\leq i\leq n$. Assuming that $\delta_{j}=\delta^*_{j}$ for all $1\leq j<i$, we now prove $\delta_i=\delta^*_i$. Recall that $\delta_i= \min_{s\in S_{p_i}} cost(s)$ and $\delta^*_i= \min_{s^*\in S^*_{p_i^*}} cost(s^*)$. As argued above, each disk $s\in S_{p_i}$ defines a segment in $S^*_{p_i^*}$ with the same weight and each segment $s^*\in S^*_{p_i^*}$ is defined by a disk in $S_{p_i}$ with the same weight. Let $s^*$ be the segment of $S^*_{p_i^*}$ defined by a disk $s\in S_{p_i}$. To prove $\delta_i=\delta_i^*$, it suffices to show that $cost(s)$ of $s$ is equal to $cost(s^*)$ of $s^*$. To see this, first note that $w(s)=w(s^*)$. By definition, $cost(s^*)=w(s^*)+\delta^*_{f_{s^*}}$. For notational convenience, let $j=f_{s^*}$. By definition, all points of $P^*[j+1,i]$ are covered by the segment $s^*$ but the point $p_j^*$ is not. Therefore, all points of $P[j+1,i]$ are covered by the disk $s$ but $p_j$ is not. As such, during the \reset\ operation of the $j$-th iteration of the for loop in Algorithm~\ref{algo:10}, $cost(s)$ will be set to $w(s)+\delta_j$; furthermore, $cost(s)$ will not be reset again during the $i'$-th iteration for all $j+1\leq i'\leq i$. Therefore, we have $cost(s)=w(s)+\delta_j$ at the beginning of the $i$-th iteration of the algorithm. Since $\delta_j=\delta^*_j$ holds by induction hypothesis and $w(s)=w(s^*)$, we obtain $cost(s)=cost(s^*)$. This proves $\delta_i=\delta_i^*$. 

The correctness of Algorithm~\ref{algo:10} is thus established. 

\paragraph{\bf Remark.} The above proof for $\delta_n=\optvalue$ also implies that $\delta_i=\optvalue^i$ for all $1\leq i\leq n-1$, where $\optvalue^i$ is the minimum total weight of a subset of disks whose union covers all points of $P[1,i]$. Indeed, we can  apply the same proof to the points of $P[1,i]$ only. Observe that $\delta_i$ will never change after the $i$-th iteration of Algorithm~\ref{algo:10}.

\section{Algorithm implementation}
\label{sec:implementation}
In this section, we discuss the implementation of Algorithm~\ref{algo:10}. Specifically, we describe how to implement the two key operations \findmin\ and \reset. A straightforward method can implement each operation in $O(m)$ time, resulting in a total $O(mn+n\log n)$ time of the algorithm. Note that this is already a logarithmic factor improvement over the previous work of Pederson and Wang~\cite{ref:PedersenAl22}. In the following, we present a faster approach of $O(n\sqrt{m}\log^2 m +(m+n)\log(m+n))$ time. 

\paragraph{\bf Duality.}
Recall that the points of $P$ are sorted from left to right as $p_1,p_2,\ldots,p_n$.
In fact, we consider the problem in the ``dual'' setting. For each point $p_i\in P$, let $d_i$ denote the unit disk centered at $p_i$, and we call $d_i$ the {\em dual disk} of $p_i$. For each disk $s\in S$, let $q_s$ denote the center of $s$, and we call $q_s$ the {\em dual point} of $s$. We define the weight of $q_s$ to be equal to $w(s)$. 
We use $D$ to denote the set of all dual disks and $Q$ the set of all dual points. 
For each dual point $q\in Q$, let $w(q)$ denote its weight. 
Because all disks of $S$ are unit disks, we have the following observation. 
\begin{observation}\label{obser:dual}
    A disk $s\in S$ covers a point $p_i\in P$ if and only if the dual point $q_s$ is covered by the dual disk $d_i$.
\end{observation}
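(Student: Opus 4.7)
The plan is to reduce both containment conditions to the same Euclidean distance inequality, exploiting the fact that every disk involved has the same radius. First, I would recall the analytic definition of disk containment: a point $p_i$ lies inside $s$ if and only if $\|p_i - q_s\| \le 1$, where $q_s$ denotes the center of $s$ and I use that $s$ has unit radius by the hypothesis on $S$. Second, since $d_i$ is defined as the unit disk centered at $p_i$, the dual point $q_s$ lies inside $d_i$ if and only if $\|q_s - p_i\| \le 1$.

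With these two reformulations in hand, the equivalence follows immediately from the symmetry of the Euclidean distance, $\|p_i - q_s\| = \|q_s - p_i\|$. Both sides of the ``if and only if'' reduce to the same inequality, completing the proof. The only point worth flagging is the essential use of the assumption that all disks of $S$ (and hence all dual disks, since they are also unit disks) share the \emph{same} radius; without this, the two inequalities would compare $\|p_i - q_s\|$ against different thresholds, and the equivalence could fail. There is no real obstacle here, as the entire argument is a one-line consequence of the symmetry of the Euclidean metric together with the unit-radius assumption.
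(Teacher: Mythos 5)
Your proof is correct and matches the paper's (implicit) argument: the paper offers no proof beyond the remark that all disks of $S$ are unit disks, and the intended justification is exactly your one-line reduction of both containments to the same distance inequality via symmetry of the Euclidean metric. Your observation that the common radius is the essential hypothesis is precisely the point the paper is signaling.
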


For any disk $d_i\in D$, let $Q_{d_i}$ denote the subset of dual points of $Q$ that are covered by $d_i$, i.e., $Q_{d_i}=Q\cap d_i$. Define $\overline{Q_{d_i}}=Q\setminus Q_{d_i}$. In light of Observation~\ref{obser:dual}, Algorithm~\ref{algo:10} is equivalent to the following Algorithm~\ref{algo:20}. 

\begin{algorithm}[h]
	\caption{An algorithm ``dual'' to Algorthm~\ref{algo:10}}
	\label{algo:20}
	\SetAlgoNoLine
    \BlankLine
	$cost(q)\leftarrow w(q)$, for all dual points $q\in Q$\;
	\For{$i\leftarrow 1$ \KwTo $n$}
	{
        $\delta_i\leftarrow \min_{q\in Q_{d_i}} cost(q)$\tcp*[r]{\findmin\ Operation}         
        \label{ln:findmin}
        $cost(q)\leftarrow w(q) + \delta_i$ for all dual points $q\in \overline{Q_{d_i}}$\tcp*[r]{\reset\ Operation} 
        \label{ln:reset}
	}
    \Return $\delta_n$\; 
\end{algorithm}

In the following, we will present an implementation for Algorithm~\ref{algo:20}, and in particular, for the two operations \findmin\ and \reset. 

For each disk $d_i\in D$, since its center is above the $x$-axis $\ell$ and all points of $Q$ are below $\ell$, only the portion of $d_i$ below $\ell$ matters to Algorithm~\ref{algo:20}.
We call the boundary portion of $d_i$ below $\ell$ the {\em lower arc} of $d_i$. 
Let $H$ denote the set of the lower arcs of all disks of $D$. 

\paragraph{\bf Cuttings.}
Our algorithm will need to construct a cutting on the arcs of $H$~\cite{ref:ChazelleCu93}. We explain this concept first. Note that $|H|=n$. 
For a parameter $r$ with $1 \leq r \leq n$, a {\em $(1/r)$-cutting} $\Xi$ of size $O(r^2)$ for $H$ is a collection of $O(r^2)$ constant-complexity cells whose union covers the entire plane and whose interiors are pairwise disjoint such that
the interior of each cell $\sigma\in \Xi$ is crossed by at most $n / r$ arcs of $H$, i.e.,
$|H_{\sigma}| \leq n/r$, where $H_{\sigma}$ is the subset of arcs of $H$ that cross the interior of $\sigma$ ($H_{\sigma}$ is often called the {\em conflict list} in the literature). Let $D_{\sigma}$ be the subset of disks of $D$ whose lower arcs are in $H_{\sigma}$. Hence, we also have $|D_{\sigma}|\leq n/r$.

We actually need to construct a \emph{hierarchical cutting} for $H$~\cite{ref:ChazelleCu93}. We say that a cutting $\Xi'$ \emph{$c$-refines} another cutting $\Xi$ if each cell of $\Xi'$ is completely inside a single cell of $\Xi$ and each cell of $\Xi$ contains at most $c$ cells of $\Xi'$. Let $\Xi_0$ denote the cutting with a single cell that is the entire plane. We define cuttings $\{\Xi_0, \Xi_1, ..., \Xi_k\}$, in which each $\Xi_i$, $1 \leq i \leq k$, is a $(1/\rho^i)$-cutting of size $O(\rho^{2i})$ that $c$-refines $\Xi_{i - 1}$, for two constants $\rho$ and $c$. By setting $k = \lceil \log_\rho r \rceil$, the last cutting $\Xi_k$ is a $(1/r)$-cutting. The sequence $\{\Xi_0, \Xi_1, ..., \Xi_k\}$ is called a {\em hierarchical $(1/r)$-cutting} for $H$. If a cell $\sigma'$ of $\Xi_{i - 1}$, $1 \leq i \leq k$, contains cell $\sigma$ of $\Xi_i$, we say that $\sigma'$ is the \emph{parent} of $\sigma$ and $\sigma$ is a \emph{child} of $\sigma'$.  We can also define {\em ancestors} and {\em descendants} correspondingly.
As such, the hierarchical $(1/r)$-cutting can be viewed as a tree structure with the single cell of $\Xi_0$ as the root. We often use $\Xi$ to denote the set of all cells in all cuttings $\Xi_i$, $0\leq i\leq k$.
The total number of cells of $\Xi$ is $O(r^2)$~\cite{ref:ChazelleCu93}. 

\begin{figure}[h]
\begin{minipage}[t]{\textwidth}
\begin{center}
\includegraphics[height=1.0in]{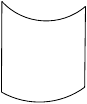}
\caption{\footnotesize Illustrating a pseudo-trapezoid.}
\label{fig:pseudotrap}
\end{center}
\end{minipage}
\vspace{-0.15in}
\end{figure}

A hierarchical $(1/r)$-cutting of $H$ can be computed in $O(nr)$ time, e.g., by the algorithm in \cite{ref:WangUn23}, which adapts Chazelle's algorithm~\cite{ref:ChazelleCu93} for hyperplanes. The algorithm also produces the conflict lists $H_{\sigma}$ (and thus $D_{\sigma}$) for all cells $\sigma\in \Xi$, implying that the total size of these conflict lists is bounded by $O(nr)$. In particular, each cell of the cutting produced by the algorithm of \cite{ref:WangUn23} is a (possibly unbounded) {\em pseudo-trapezoid} that typically has two vertical line segments as left and right sides, a sub-arc of an arc of $H$ as a top side (resp., bottom side) (see Fig.~\ref{fig:pseudotrap}).

In what follows, we first discuss a preprocessing step in Section~\ref{sec:preprocess}. The algorithms for handling the two key operations are described in the subsequent two subsections, respectively. Section~\ref{sec:summary} finally summarizes everything. 

\subsection{Preprocessing}
\label{sec:preprocess}
In order to handle the two key operations, we first perform some preprocessing work before we run Algorithm~\ref{algo:20}. As discussed above, we first sort all points of $P$ from left to right. In the following, we describe a data structure which will be used to support the two key operations. 

We start by computing a hierarchical $(1/r)$-cutting $\{\Xi_0, \Xi_1, ..., \Xi_k\}$ for $H$ in $O(nr)$ time~\cite{ref:ChazelleCu93,ref:WangUn23}, for a parameter $1\leq r\leq n$ to be determined later. We follow the above notation, e.g., $\sigma$, $H_\sigma$, $D_{\sigma}$, $\Xi$. As discussed above, the cutting algorithm also produces the conflict lists $H_{\sigma}$ (and thus $D_{\sigma}$) for all cells $\sigma\in \Xi$. Using the conflict lists, we compute a list $L(d_i)$ for each disk $d_i\in D$, where $L(d_i)$ comprises all cells $\sigma\in \Xi$ such that $d_i\in D_{\sigma}$. Computing $L(d_i)$ for all disks $d_i\in D$ can be done in $O(\sum_{\sigma\in \Xi}|H_{\sigma}|)$ time by simply traversing the conflict lists $H_{\sigma}$ of all cells $\sigma\in \Xi$, which takes $O(nr)$ time as $\sum_{\sigma\in \Xi}|H_{\sigma}|=O(nr)$. Note that this also implies $\sum_{d_i\in D}|L(d_i)|=O(nr)$.

For any region $R$ in the plane, let $Q(R)$ denote the subset of points of $Q$ that are inside $R$, i.e., $Q(R)=Q\cap R$. 

For simplicity, we assume that no point of $Q$ is on the boundary of any cell of $\Xi$. This implies that 
each point of $Q$ is in the interior of a single cell of $\Xi_i$, for all $0\leq i\leq k$. 
We compute the subset $Q(\sigma)$ of all cells $\sigma$ in the last cutting $\Xi_k$. This can be done by a point location procedure as follows. For each point $q\in Q$, starting from the only cell of $\Xi_0$, we find the cell $\sigma_i$ of $\Xi_i$ containing $q$, for each $0\leq i\leq k$. More precisely, suppose that $\sigma_i$ is known. To find $\sigma_{i+1}$, we simply check all $O(1)$ children of $\sigma_i$ and find the one that contains $q$, which takes $O(1)$ time. As such, processing all points of $Q$ takes $O(m\log r)$ time as $k=O(\log r)$, after which $Q(\sigma)$ for all cells $\sigma\in \Xi_k$ are computed. We explicitly store $Q(\sigma)$ for all cells $\sigma\in \Xi_k$. Note that the subsets $Q(\sigma)$ for all cells $\sigma\in \Xi_k$ form a partition of $Q$. Therefore, we have the following observation. 

\begin{observation}\label{obser:partition}
$\sum_{\sigma\in \Xi_k}|Q(\sigma)|=m$.
\end{observation}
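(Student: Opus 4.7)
The plan is to argue directly from the definition of a cutting and the general position assumption that the sets $\{Q(\sigma)\}_{\sigma\in \Xi_k}$ form a partition of $Q$, after which the equality is immediate from $|Q|=m$.

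First, I would recall the two defining properties of the cutting $\Xi_k$ stated earlier in the excerpt: its cells cover the entire plane, and their interiors are pairwise disjoint. Together, these imply that every point of the plane lies in at least one cell of $\Xi_k$, and lies in the interior of at most one cell of $\Xi_k$.

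Next, I would invoke the simplifying assumption made just above the statement, namely that no point of $Q$ lies on the boundary of any cell of $\Xi$. Combining this assumption with the two properties above, each point $q\in Q$ lies in the interior of exactly one cell $\sigma\in \Xi_k$. Consequently, the collection $\{Q(\sigma)\}_{\sigma\in \Xi_k}$ is a partition of $Q$ (with some parts possibly empty).

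Finally, summing cardinalities over a partition yields
\[
\sum_{\sigma\in \Xi_k}|Q(\sigma)| \;=\; |Q| \;=\; m,
\]
which establishes the observation. I do not anticipate any genuine obstacle here: the only subtlety is ensuring that boundary points are excluded so that no point of $Q$ is double-counted, and this is exactly what the stated assumption provides.
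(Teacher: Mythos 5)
Your argument is correct and matches the paper's reasoning exactly: the paper simply notes that, since no point of $Q$ lies on a cell boundary, the subsets $Q(\sigma)$ over $\sigma\in\Xi_k$ partition $Q$, so the cardinalities sum to $|Q|=m$. Your write-up just makes the same partition argument explicit.
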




Note that a cell $\sigma\in \Xi$ is the ancestor of another cell $\sigma'\in \Xi$ (alternatively, $\sigma'$ is a descendant of $\sigma$) if and only if $\sigma$ fully contains $\sigma'$. 
For convenience, we consider $\sigma$ an ancestor of itself but not a descendant of itself.  
Let $A(\sigma)$ denote the set of all ancestors of $\sigma$ and $B(\sigma)$ the set of all descendants of $\sigma$.
Hence, $\sigma$ is in $A(\sigma)$ but not in $B(\sigma)$. Let $C(\sigma)$ denote the set of all children of $\sigma$
Clealy, $|A(\sigma)|=O(\log r)$ and $|C(\sigma)|=O(1)$. 

\paragraph{\bf Variables and algorithm invariants.}
For each point $q\in Q$, we associate with it a variable $\lambda(q)$. 
For each cell $\sigma\in \Xi$, we associate with it two variables: $minCost(\sigma)$ and $\lambda(\sigma)$.
If $|Q(\sigma)|=\emptyset$, then $minCost(\sigma)=\infty$ and $\lambda(\sigma)=0$ always hold during the algorithm. 
Our algorithm for handling the two key operations will maintain the following two invariants. 
\begin{invariant}
For any point $q\in Q$, $cost(q)=w(q)+\lambda(q)+\sum_{\sigma'\in A(\sigma)}\lambda(\sigma')$, where $\sigma$ is the cell of $\Xi_k$ that contains $q$. 
\end{invariant}
\begin{invariant}
For each cell $\sigma\in \Xi$ with $Q(\sigma)\neq \emptyset$, if $\sigma$ is a cell of $\Xi_k$, then $minCost(\sigma)=\min_{q\in Q(\sigma)}(w(q)+\lambda(q))$; otherwise, $minCost(\sigma)=\min_{\sigma'\in C(\sigma)}(minCost(\sigma')+\lambda(\sigma'))$. 
\end{invariant}

The above algorithm invariants further imply the following observation. 
\begin{observation}\label{obser:invariant}
For each cell $\sigma\in \Xi$ with $Q(\sigma)\neq \emptyset$, $\min_{q\in Q(\sigma)}cost(q)=minCost(\sigma)+\sum_{\sigma'\in A(\sigma)}\lambda(\sigma')$.    
\end{observation}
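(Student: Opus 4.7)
The plan is to establish the observation by induction on the level of $\sigma$ in the hierarchical cutting, working from the leaf cells in $\Xi_k$ upward to the root, using both algorithm invariants.

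For the base case, suppose $\sigma\in \Xi_k$ and $Q(\sigma)\neq\emptyset$. Each $q\in Q(\sigma)$ lies in the cell $\sigma$ of $\Xi_k$, so by Invariant 1, $cost(q)=w(q)+\lambda(q)+\sum_{\sigma'\in A(\sigma)}\lambda(\sigma')$. The trailing sum is independent of $q$, so it can be pulled out of the minimum over $q\in Q(\sigma)$, yielding $\min_{q\in Q(\sigma)}cost(q)=\min_{q\in Q(\sigma)}(w(q)+\lambda(q))+\sum_{\sigma'\in A(\sigma)}\lambda(\sigma')$. By the leaf case of Invariant 2, the first term equals $minCost(\sigma)$, which is what we want.

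For the inductive step, suppose $\sigma\in\Xi_i$ with $i<k$ and $Q(\sigma)\neq\emptyset$, and assume the claim holds for every descendant of $\sigma$, in particular for every child $\sigma''\in C(\sigma)$ with $Q(\sigma'')\neq\emptyset$. Because each point of $Q$ lies in the interior of exactly one cell of every $\Xi_j$, the sets $\{Q(\sigma''):\sigma''\in C(\sigma)\}$ partition $Q(\sigma)$. Hence
\begin{equation*}
\min_{q\in Q(\sigma)}cost(q)=\min_{\sigma''\in C(\sigma),\,Q(\sigma'')\neq\emptyset}\ \min_{q\in Q(\sigma'')}cost(q).
\end{equation*}
Applying the induction hypothesis to each such $\sigma''$, and using the identity $A(\sigma'')=\{\sigma''\}\cup A(\sigma)$ (since the proper ancestors of $\sigma''$ are precisely $\sigma$ together with the ancestors of $\sigma$), we get
\begin{equation*}
\min_{q\in Q(\sigma'')}cost(q)=minCost(\sigma'')+\lambda(\sigma'')+\sum_{\sigma'\in A(\sigma)}\lambda(\sigma').
\end{equation*}
Substituting and factoring out the sum over $A(\sigma)$ yields $\min_{q\in Q(\sigma)}cost(q)=\min_{\sigma''\in C(\sigma),\,Q(\sigma'')\neq\emptyset}(minCost(\sigma'')+\lambda(\sigma''))+\sum_{\sigma'\in A(\sigma)}\lambda(\sigma')$.

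Finally, I would invoke the non-leaf case of Invariant 2 to identify the remaining minimum with $minCost(\sigma)$. The only subtlety, and the main thing to check carefully, is the handling of children with $Q(\sigma'')=\emptyset$: by the convention that such cells satisfy $minCost(\sigma'')=\infty$ and $\lambda(\sigma'')=0$, the term $minCost(\sigma'')+\lambda(\sigma'')$ is $\infty$ and is therefore never selected by the minimum. Since $Q(\sigma)\neq\emptyset$, at least one child has $Q(\sigma'')\neq\emptyset$, so restricting the minimum in Invariant 2 to such children does not change its value, completing the induction.
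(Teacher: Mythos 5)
Your proof is correct and follows essentially the same route as the paper's: induction from the leaf cells of $\Xi_k$ upward, using Invariant 1 plus the leaf case of Invariant 2 for the base case, and the decomposition $\min_{q\in Q(\sigma)}cost(q)=\min_{\sigma''\in C(\sigma)}\min_{q\in Q(\sigma'')}cost(q)$ together with $A(\sigma'')=\{\sigma''\}\cup A(\sigma)$ and the non-leaf case of Invariant 2 for the inductive step. Your explicit treatment of children with $Q(\sigma'')=\emptyset$ via the $minCost=\infty$, $\lambda=0$ convention is a small point the paper leaves implicit, and it is handled correctly.
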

\begin{proof}
We prove the observation by induction. For the base case, consider a cell $\sigma$ of the last cutting $\Xi_k$. By the two algorithm invariants, we have 
\begin{equation*}
\begin{aligned}
\min_{q\in Q(\sigma)}cost(q)&=\min_{q\in Q(\sigma)}\Bigl(w(q)+\lambda(q)+\sum_{\sigma'\in A(\sigma)}\lambda(\sigma')\Bigr) && \text{by Algorithm Invariant 1}\\
&=\min_{q\in Q(\sigma)}\Bigl(w(q)+\lambda(q)\Bigr)+\sum_{\sigma'\in A(\sigma)}\lambda(\sigma')\\
&=minCost(\sigma)+\sum_{\sigma'\in A(\sigma)}\lambda(\sigma'). && \text{by Algorithm Invariant 2}         
\end{aligned}
\end{equation*}
This proves the observation for $\sigma$. 

Now consider a cell $\sigma\in \Xi\setminus\Xi_k$. We assume that the observation holds for all children $\sigma'$ of $\sigma$, i.e., $\min_{q\in Q(\sigma')}cost(q)=minCost(\sigma')+\sum_{\sigma''\in A(\sigma')}\lambda(\sigma'')$. Then, we have 
\begin{align*}
\min_{q\in Q(\sigma)}cost(q) &=\min_{\sigma'\in C(\sigma)}\min_{q\in Q(\sigma')}cost(q)\\
&=\min_{\sigma'\in C(\sigma)}\Bigl(minCost(\sigma')+\sum_{\sigma''\in A(\sigma')}\lambda(\sigma'')\Bigr) && \text{by induction hypothesis} \\
&=\min_{\sigma'\in C(\sigma)}\Bigl(minCost(\sigma')+\lambda(\sigma')\Bigr)+\sum_{\sigma''\in A(\sigma)}\lambda(\sigma'')\\
&=minCost(\sigma)+\sum_{\sigma''\in A(\sigma)}\lambda(\sigma'').  && \text{by Algorithm Invariant 2}   
\end{align*}
This proves the observation. 
\qed
\end{proof}

For each cell $\sigma\in \Xi$, we also maintain $\calL(\sigma)$, a list comprising all descendant cells $\sigma'$ of $\sigma$ whose values $\lambda(\sigma')$ are not zero and all points $q\in Q(\sigma)$ whose values $\lambda(q)$ are not zero. As $\calL(\sigma)$ has both cells of $B(\sigma)$ and points of $Q(\sigma)$, for convenience, we use an ``element'' to refer to either a cell or a point of $\calL(\sigma)$.
As such, for any element $e\in B(\sigma)\cup Q(\sigma)$ with $e\not\in \calL(\sigma)$, $\lambda(e)=0$ must hold. 
As will be seen later, whenever the algorithm sets $\lambda(\sigma)$ to a nonzero value for a cell $\sigma\in \Xi$,  $\sigma$ will be added to $\calL(\sigma')$ for every ancestor $\sigma'$ of $\sigma$ with $\sigma'\neq \sigma$. Similarly, whenever the algorithm sets $\lambda(q)$ to a nonzero value for a point $q$, then $q$ will be added to $\calL(\sigma')$ for every cell $\sigma'\in A(\sigma)$, where $\sigma$ is the cell of $\Xi_k$ containing $q$. 


\paragraph{\bf Initialization.}
The above describes the data structure. We now initialize the data structure, and in particular, initialize the variables $\calL(\cdot)$, $\lambda(\cdot)$, $minCost(\cdot)$ so that the algorithm invariants hold. 

First of all, for each cell $\sigma\in \Xi$, we set $\calL(\sigma)=\emptyset$ and $\lambda(\sigma)=0$. For each point $q\in Q$, we set $\lambda(q)=0$. Since $cost(q)=w(q)$ initially, it is not difficult to see that Algorithm Invariant 1 holds. 

We next set $minCost(\sigma)$ for all cells of $\sigma\in \Xi$ in a bottom-up manner following the tree structure of $\Xi$. Specifically, for each cell $\sigma$ in the last cutting $\Xi_k$, we set $minCost(\sigma)=\min_{q\in Q(\sigma)}w(q)$ by simply checking every point of $Q(\sigma)$. If $Q(\sigma)=\emptyset$, we set $minCost(\sigma)=\infty$. 
This establishes the second algorithm invariant for all cells $\sigma\in \Xi_k$. 
Then, we set $minCost(\sigma)$ for all cells of $\sigma\in \Xi_{k-1}$ with $minCost(\sigma)=\min_{\sigma'\in C(\sigma)}(minCost(\sigma')+\lambda(\sigma'))$, after which the second algorithm invariant holds for all cells $\sigma\in \Xi_{k-1}$. We continue this process to set $minCost(\sigma)$ for cells in $\Xi_{k-2},\Xi_{k-3},\ldots,\Xi_0$. 
After that, the second algorithm invariant is established for all cells $\sigma\in \Xi$. 

In addition, for each cell $\sigma$ in the last cutting $\Xi_k$, in order to efficiently update $minCost(\sigma)$ once $\lambda(q)$ changes for a point $q\in Q(\sigma)$, we construct a min-heap $\calH(\sigma)$ on all points $q$ of $Q(\sigma)$ with the values $w(q)+\lambda(q)$ as ``keys''. Using the heap, if $\lambda(q)$ changes for a point $q\in Q(\sigma)$, $minCost(\sigma)$ can be updated in $O(\log m)$ time as $|Q(\sigma)|\leq m$. 

This finishes our preprocessing step for Algorithm~\ref{algo:20}. The following lemma analyzes the time complexity of the preprocessing. 

\begin{lemma}\label{lem:timepre}
The preprocessing takes $O(n\log n+nr+m\log r)$ time. 
\end{lemma}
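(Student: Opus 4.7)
The plan is to go through the preprocessing step by step and add up the costs, then observe that every term is absorbed by $O(n\log n + nr + m\log r)$.

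First I would account for sorting $P$, which costs $O(n\log n)$. Next, the hierarchical $(1/r)$-cutting together with all conflict lists $H_\sigma$ (and therefore $D_\sigma$) can be built in $O(nr)$ time by the construction of~\cite{ref:WangUn23,ref:ChazelleCu93}, and traversing every conflict list to assemble the lists $L(d_i)$ costs $O(\sum_{\sigma\in\Xi}|H_\sigma|)=O(nr)$ as stated in the paper. I would then handle the point location: each $q\in Q$ descends through the $k+1=O(\log r)$ levels of the hierarchy, and since $|C(\sigma)|=O(1)$, locating $q$ in each successive child costs $O(1)$; this yields $O(m\log r)$ in total and produces the sets $Q(\sigma)$ for every leaf cell $\sigma\in\Xi_k$ as a byproduct.

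Next I would bound the initialization of the auxiliary variables. Setting $\lambda(q)\!\leftarrow\!0$ for all $q\in Q$ and $\lambda(\sigma)\!\leftarrow\!0$, $\calL(\sigma)\!\leftarrow\!\emptyset$ for all $\sigma\in\Xi$ costs $O(m+|\Xi|)=O(m+r^2)$. For $minCost(\sigma)$, the leaf cells of $\Xi_k$ are handled by scanning $Q(\sigma)$ for each, which totals $\sum_{\sigma\in\Xi_k}|Q(\sigma)|=m$ time by Observation~\ref{obser:partition}; the bottom-up sweep over $\Xi\setminus\Xi_k$ then touches each cell once and pays $O(1)$ per cell since every cell has $O(1)$ children, contributing $O(|\Xi|)=O(r^2)$. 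Finally, building the min-heap $\calH(\sigma)$ for every $\sigma\in\Xi_k$ via Floyd's linear-time bottom-up heap construction costs $O(|Q(\sigma)|)$ per cell and hence $O(m)$ overall.

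Adding these contributions gives $O(n\log n + nr + m\log r + m + r^2)$. Since the parameter satisfies $r\le n$, the term $r^2\le nr$ is absorbed, as is the $O(m)$ term by $O(m\log r)$, yielding the stated bound $O(n\log n + nr + m\log r)$. The argument is essentially bookkeeping; the only mild subtlety worth checking explicitly is that the $O(r^2)$ cost coming from the total number of cutting cells never escapes the $O(nr)$ term, which is immediate from $r\le n$.
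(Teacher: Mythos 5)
Your proof is correct and follows essentially the same accounting as the paper's own proof: itemize the costs of sorting, cutting construction, conflict-list traversal, point location, variable initialization, the bottom-up $minCost$ computation, and heap construction, then absorb the $O(r^2)$ and $O(m)$ terms using $r\le n$. No issues.
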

\begin{proof}
First of all, sorting $P$ takes $O(n\log n)$ time. Constructing the hierarchical cutting $\Xi$ takes $O(nr)$ time. Computing the lists $L(d_i)$ for all disks $d_i\in D$ also takes $O(nr)$ time. The point location procedure for computing the subsets $Q(\sigma)$ for all cells $\sigma\in \Xi_k$ runs in $O(m\log r)$ time. For the initialization step, setting $\calL(Q)=\emptyset$ and $\lambda(\sigma)=0$ for all cells $\sigma\in \Xi$ takes $O(r^2)$ time as $|\Xi|=O(r^2)$. Since $\Xi_k$ has $O(r^2)$ cells, computing $minCost(\sigma)$ for all cells $\sigma\in \Xi_k$ can be done in $O(r^2+\sum_{\sigma\in\Xi_k}|Q(\sigma)|)$ time, which is $O(r^2+m)$ by Observation~\ref{obser:partition}. Initializing $minCost(\sigma)$ for all other cells $\sigma\in \Xi\setminus \Xi_k$ takes $O(r^2)$ time since each cell has $O(1)$ children (and thus computing $minCost(\sigma)$ for each such cell $\sigma$ takes $O(1)$ time). Finally, constructing a heap $\calH(\sigma)$ for all cells $\sigma\in \Xi_k$ takes $O(\sum_{\sigma\in\Xi_k}|Q(\sigma)|)$ time, which is $O(m)$ by Observation~\ref{obser:partition}. Since $r\leq n$, $r^2\leq nr$. Therefore, the total time of the preprocessing is $O(n\log n+nr+m\log r)$. 
\qed 
\end{proof}


\subsection{The \findmin\ operation}

We now discuss how to perform the \findmin\ operation. 

Consider a disk $d_i$ in \findmin\ operation of the $i$-th iteration of Algorithm~\ref{algo:20}. The goal is to compute $\min_{q\in Q_{d_i}}cost(q)$, i.e., the minimum cost of all points of $Q$ inside the disk $d_i$. 

Recall that $L(d_i)$ is the list of all cells $\sigma\in \Xi$ such that $d_i\in D_{\sigma}$.
Define $L_1(d_i)$ to be the set of all cells of $L(d_i)$ that are from $\Xi_k$ and let $L_2(d_i)=L(d_i)\setminus L_1(d_i)$. Define $L_3(d_i)$ as the set of cells $\sigma\in \Xi$ such that $\sigma$'s parent is in $L_2(d_i)$ and $\sigma$ is completely contained in $d_i$. 
We first have the following observation following the definition of the hierarchical cutting. 

\begin{observation}\label{obser:Qunion}
$Q_{d_i}$ is the union of $\bigcup_{\sigma\in L_1(d_i)}(Q(\sigma)\cap d_i)$ and $\bigcup_{\sigma\in L_3(d_i)}Q(\sigma)$.
\end{observation}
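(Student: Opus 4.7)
The plan is to prove the set equality by two inclusions. The easier direction is to verify that the right-hand side sits inside $Q_{d_i}$: any $q \in Q(\sigma) \cap d_i$ with $\sigma \in L_1(d_i)$ is by definition in $Q \cap d_i = Q_{d_i}$, and for $\sigma \in L_3(d_i)$, the defining property $\sigma \subseteq d_i$ gives $Q(\sigma) \subseteq Q \cap d_i = Q_{d_i}$.

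For the harder direction, I would fix an arbitrary $q \in Q_{d_i}$ and let $\sigma_k$ be the unique cell of $\Xi_k$ that contains $q$. Split into two cases. If $d_i \in D_{\sigma_k}$, then $\sigma_k \in L_1(d_i)$ and $q \in Q(\sigma_k) \cap d_i$ is captured by the first union. Otherwise, $d_i$'s lower arc does not meet the interior of $\sigma_k$; because each cell of the cutting lies entirely on one side of every arc not in its conflict list, and the witness $q \in \sigma_k \cap d_i$ pins $\sigma_k$ on the interior side of the arc, I can conclude $\sigma_k \subseteq d_i$.

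The second step is to walk upward in the hierarchy from $\sigma_k$. Since the root $\Xi_0$ is the whole plane and cannot be contained in the bounded disk $d_i$, while $\sigma_k \subseteq d_i$, there is a highest ancestor $\sigma'$ of $\sigma_k$ with $\sigma' \subseteq d_i$; let $\sigma^*$ be its parent, so $\sigma^* \not\subseteq d_i$. From $\sigma' \subseteq \sigma^*$, $\sigma' \subseteq d_i$, and $\sigma^* \not\subseteq d_i$, the boundary of $d_i$ must cut through the interior of $\sigma^*$; in the line-separable setting, the only part of $d_i$'s boundary that can cross a cell of the arrangement is the lower arc, giving $d_i \in D_{\sigma^*}$ and hence $\sigma^* \in L(d_i)$. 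Because $\sigma^*$ has a child it is not in $\Xi_k$, so $\sigma^* \in L_2(d_i)$; then by the definition of $L_3(d_i)$, the child $\sigma'$ belongs to $L_3(d_i)$, and $q \in \sigma'$ places $q \in Q(\sigma')$ in the second union, finishing the proof.

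The main obstacle is the geometric ``one-side-or-the-other'' claim used twice above: if $d_i \notin D_\sigma$, then $\sigma$ is either entirely inside $d_i$ or disjoint from it. This is what lets me promote $q \in d_i \cap \sigma_k$ to $\sigma_k \subseteq d_i$, and also what lets me turn $\sigma^* \not\subseteq d_i$ (together with $\sigma' \subseteq d_i$) into $d_i \in D_{\sigma^*}$. It relies on the fact that each cell is a pseudo-trapezoid in the arrangement of the lower arcs of $D$, so a cell's position relative to any single lower arc is determined uniformly by any interior witness point; the line-separable configuration of $P$ and $S$ is what guarantees no other part of $d_i$'s boundary can interfere.
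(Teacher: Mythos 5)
Your proof is correct and follows essentially the same route as the paper's: the easy inclusion directly from the definitions of $L_1(d_i)$ and $L_3(d_i)$, and for the reverse inclusion a case split on whether the $\Xi_k$-cell containing $q$ is crossed by (the lower arc of) $d_i$, walking up the hierarchy to the crossing/containment transition in the contained case to exhibit a parent in $L_2(d_i)$ and a child in $L_3(d_i)$. If anything, you are more explicit than the paper about the ``each cell lies entirely on one side of any arc not in its conflict list'' fact on which both arguments implicitly rely.
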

\begin{proof}
Consider a point $q\in \bigcup_{\sigma\in L_1(d_i)}(Q(\sigma)\cap d_i)$. Suppose that $q$ is in $Q(\sigma)\cap d_i$ for some cell $\sigma\in L_1(d_i)$. Then, since $q\in d_i$, it is obvious true that $q\in Q_{d_i}$. 

Consider a point $q\in \bigcup_{\sigma\in L_3(d_i)}Q(\sigma)$. Suppose that $q\in Q(\sigma)$ for some cell $\sigma\in L_3(d_i)$. By the definition of $L_3(d)$, $\sigma$ is fully contained in $d_i$. Therefore, $q\in d_i$ holds. Hence, $q\in Q_{d_i}$. 

On the other hand, consider a point $q\in Q_{d_i}$. By definition, $d_i$ contains $q$. Let $\sigma$ be the cell of $\Xi_k$ containing $q$. Since both $d_i$ and $\sigma$ contain $q$, $d_i\cap \sigma\neq \emptyset$. Therefore, either $\sigma\subseteq d_i$ or the boundary of $d_i$ crosses $\sigma$. In the latter case, we have $\sigma\in L_1(d_i)$ and thus $q\in \bigcup_{\sigma\in L_1(d_i)}(Q(\sigma)\cap d_i)$. In the former case, $\sigma$ must have two ancestors $\sigma_1$ and $\sigma_2$ such that (1) $\sigma_1$ is the parent of $\sigma_2$; (2) $\sigma_2$ is fully contained in $d_i$; (3) the boundary of $d_i$ crosses $\sigma_1$. This is true because $\sigma$ is fully contained in $d_i$ while the boundary of $d_i$ crosses the only cell of $\Xi_0$, which is the entire plane and is an ancestor of $\sigma$. As such, $\sigma_1$ must be in $L_2(d_i)$ and $\sigma_2$ must be in $L_3(d_i)$. Therefore, $q$ must be in $\bigcup_{\sigma\in L_3(d_i)}Q(\sigma)$. 

This proves the observation. 
\qed
\end{proof}

With Observation~\ref{obser:Qunion}, we now describe our algorithm for \findmin. 
Let $\alpha$ be a variable, which is initialized to $\infty$. At the end of the algorithm, we will have $\alpha=\min_{q\in Q_{d_i}}cost(q)$. For each cell $\sigma$ in the list $L(d_i)$, if it is from $\sigma\in \Xi_k$, i.e., $\sigma\in L_1(d_i)$, then we process $\sigma$ as follows. For each point $q\in Q(\sigma)$, by Algorithm Invariant 1, we have $cost(q)=w(q)+\lambda(q)+\sum_{\sigma'\in A(\sigma)}\lambda(\sigma')$. If $q\in d_i$, we compute $cost(q)$ by visiting all cells of $A(\sigma)$, which takes $O(\log r)$ time, and then we update $\alpha=\min\{\alpha,cost(q)\}$. 

If $\sigma\in L_2(d_i)$, then we process it as follows. For each child $\sigma'$ of $\sigma$ that is fully contained in $d_i$ (i.e., $\sigma\in L_3(d_i)$), we compute $\alpha_{\sigma'}=minCost(\sigma')+\sum_{\sigma''\in A(\sigma')}\lambda(\sigma'')$ by visiting all cells of $A(\sigma')$, which takes $O(\log r)$ time. By Observation~\ref{obser:invariant}, we have $\alpha_{\sigma'}=\min_{q\in Q(\sigma)}cost(q)$. Then we update $\alpha=\min\{\alpha,\alpha_{\sigma'}\}$. After processing every cell $\sigma\in L(d_i)$ as above, we return $\alpha$, which is equal to $\min_{q\in Q_{d_i}}cost(q)$ according to our algorithm invariants as well as Observation~\ref{obser:Qunion}. This finishes the \findmin\ operation. The following lemma analyzes the runtime of the operation. 

\begin{lemma}\label{lem:timefindmin}
The total time of the \findmin\ operations in the entire Algorithm~\ref{algo:20} is bounded by $O((nr+mn/r)\log r)$. 
\end{lemma}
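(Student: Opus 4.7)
The plan is to decompose the cost of a single \findmin\ operation for a disk $d_i$ into work on two kinds of cells, $L_1(d_i)$ (cells of the finest level $\Xi_k$ that appear in the conflict list of $d_i$) and $L_2(d_i)=L(d_i)\setminus L_1(d_i)$, and then bound each contribution separately when summed over $i=1,\ldots,n$. Per iteration, for each $\sigma\in L_1(d_i)$ the algorithm traverses every point $q\in Q(\sigma)$ doing an $O(\log r)$ ancestor walk (only when $q\in d_i$; the check itself is $O(1)$), so its cost is $O(|Q(\sigma)|\log r)$. For each $\sigma\in L_2(d_i)$, the algorithm inspects the $O(1)$ children of $\sigma$ and, for every child completely inside $d_i$, performs one $O(\log r)$ ancestor walk; the per-cell cost is therefore $O(\log r)$.

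Next I would bound the $L_2$ contribution by the total size of the conflict lists. Summing over all $i$ gives
\[
\sum_{i=1}^{n}|L_2(d_i)|\cdot O(\log r)\le O(\log r)\sum_{i=1}^{n}|L(d_i)|=O(nr\log r),
\]
where the last equality uses $\sum_{d_i\in D}|L(d_i)|=O(nr)$, a fact established in the preprocessing section (total conflict-list size of a hierarchical $(1/r)$-cutting).

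For the $L_1$ contribution, the key step is to swap the order of summation. A cell $\sigma\in\Xi_k$ belongs to $L_1(d_i)$ exactly when $d_i\in D_\sigma$, so
\[
\sum_{i=1}^{n}\sum_{\sigma\in L_1(d_i)}|Q(\sigma)|
=\sum_{\sigma\in\Xi_k}|D_\sigma|\cdot|Q(\sigma)|
\le \frac{n}{r}\sum_{\sigma\in\Xi_k}|Q(\sigma)|
=\frac{mn}{r},
\]
using $|D_\sigma|\le n/r$ (the cutting property) and Observation~\ref{obser:partition}. Multiplying by the $O(\log r)$ per point yields an $O((mn/r)\log r)$ bound on the $L_1$ work.

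Adding the two bounds gives $O((nr+mn/r)\log r)$, which is exactly the lemma. I do not foresee a real obstacle: the whole argument rests on two standard facts about hierarchical cuttings (total conflict-list size $O(nr)$ and per-cell conflict bound $n/r$) together with the partition property $\sum_{\sigma\in\Xi_k}|Q(\sigma)|=m$. The only subtlety is to be careful that the per-cell $O(\log r)$ factor for $L_2$ accounts for the at most $O(1)$ children of $\sigma$ each requiring an ancestor walk, and that in $L_1$ we charge $O(\log r)$ to every point of $Q(\sigma)$ (not merely those inside $d_i$), since we cannot filter without scanning.
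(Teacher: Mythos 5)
Your proof is correct and follows essentially the same route as the paper: the same split into $L_1(d_i)$ and $L_2(d_i)$, the same exchange of summation giving $\sum_{\sigma\in\Xi_k}|D_\sigma|\cdot|Q(\sigma)|\le (n/r)\cdot m$, and the same bound $\sum_i|L_2(d_i)|\le\sum_i|L(d_i)|=O(nr)$. No gaps.
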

\begin{proof}
Recall that in each operation we processes cells of $L_1(d_i)$ and cells of $L_2(d_i)$ in different ways. The total time of the operation is the sum of the time for processing $L_1(d_i)$ and the time for processing $L_2(d_i)$. 

For the time for processing $L_1(d_i)$, for each cell $\sigma\in L_1(d_i)$, for each point $q\in Q(\sigma)\cap d_i$, we spend $O(\log r)$ time computing $cost(q)$. Hence, the time for processing cells of $L_1(d_i)$ for each $d_i$ is bounded by $O(\sum_{\sigma\in L_1(d_i)}|Q(\sigma)|\log r)$. 
The total time of processing $L_1(d_i)$ in the entire algorithm is on the order of $\sum_{i=1}^n \sum_{\sigma\in L_1(d_i)}|Q(\sigma)|\cdot \log r=\sum_{\sigma\in \Xi_k}(|D_{\sigma}|\cdot |Q(\sigma)|)\cdot \log r$. 
Recall that $|D_{\sigma}|\leq n/r$ for each cell $\sigma\in \Xi_k$. 
Hence, $\sum_{\sigma\in \Xi_k}(|D_{\sigma}|\cdot |Q(\sigma)|)\leq n/r\cdot \sum_{\sigma\in \Xi_k}|Q(\sigma)|$, which is $O(mn/r)$ by Observation~\ref{obser:partition}. 
Therefore, the total time for processing cells of $L_1(d_i)$ in the entire Algorithm~\ref{algo:20} is $O(mn/r\cdot \log r)$. 

For the time for processing $L_2(d_i)$, for each cell $\sigma\in L_2(d_i)$, for each child $\sigma'$ of $\sigma$, it takes $O(\log r)$ time to compute $\alpha_{\sigma'}$. Since $\sigma$ has $O(1)$ cells, the total time for processing all cells of $L_2(d_i)$ is $O(|L_2(d_i)|\cdot \log r)$. The total time of processing $L_2(d_i)$ in the entire algorithm is on the order of $\sum_{i=1}^n|L_2(d_i)|\cdot \log r$. 
Note that $\sum_{i=1}^n|L_2(d_i)|\leq \sum_{i=1}^n|L(d_i)|=O(nr)$. Therefore, the total time for processing cells of $L_2(d_i)$ in the entire Algorithm~\ref{algo:20} is $O(nr\log r)$. 

Summing up the time for processing $L_1(d_i)$ and $L_2(d_i)$ leads to the lemma. \qed
\end{proof}

\subsection{The \reset\ operation}

We now discuss the \reset\ operation. Consider the \reset\ operation in the $i$-th iteration of Algorithm~\ref{algo:20}. The goal is to reset $cost(q)=w(q)+\delta_i$ for all points $q\in Q$ that are outside the disk $d_i$. To this end, we will update our data structure, and more specifically, update the $\lambda(\cdot)$ and $minCost(\cdot)$ values for certain cells of $\Xi$ and points of $Q$ so that the algorithm invariants still hold. 

Define $L_4(d_i)$ as the set of cells $\sigma\in \Xi$ such that $\sigma$'s parent is in $L_2(d_i)$ and $\sigma$ is completely outside $d_i$. Let $\overline{d_i}$ denote the region of the plane outside the disk $d_i$. We have the following observation, which is analogous to Observation~\ref{obser:Qunion}.

\begin{observation}\label{obser:resetunion}
$\overline{Q_{d_i}}$ is the union of $\bigcup_{\sigma\in L_1(d_i)}(Q(\sigma)\cap \overline{d_i})$ and $\bigcup_{\sigma\in L_4(d_i)}Q(\sigma)$.
\end{observation}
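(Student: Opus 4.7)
The plan is to mimic the proof of Observation~\ref{obser:Qunion} almost verbatim, replacing ``inside $d_i$'' with ``outside $d_i$'' and $L_3(d_i)$ with $L_4(d_i)$. I will establish the equality of the two sets by showing containment in both directions.

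For the easy direction (right-hand side $\subseteq \overline{Q_{d_i}}$), I would take a point $q$ in the right-hand side union and observe that either $q \in Q(\sigma) \cap \overline{d_i}$ for some $\sigma \in L_1(d_i)$, in which case $q \in \overline{d_i}$ directly by definition, or $q \in Q(\sigma)$ for some $\sigma \in L_4(d_i)$, in which case $\sigma$ is fully contained in $\overline{d_i}$ (by the definition of $L_4(d_i)$), so again $q \in \overline{d_i}$. Either way $q \notin d_i$, giving $q \in \overline{Q_{d_i}}$.

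For the reverse direction ($\overline{Q_{d_i}} \subseteq$ right-hand side), I would take any $q \in \overline{Q_{d_i}}$, let $\sigma \in \Xi_k$ be the unique leaf cell containing $q$, and split into two cases depending on whether the boundary of $d_i$ crosses $\sigma$. If it does, then $\sigma \in L_1(d_i)$ and $q \in Q(\sigma)\cap \overline{d_i}$ so we are done. Otherwise the boundary of $d_i$ does not meet $\sigma$; since $q\in\sigma$ and $q\notin d_i$, the cell $\sigma$ must be completely outside $d_i$. The key step is then to produce ancestors $\sigma_2$ (outside $d_i$) and $\sigma_1$ (parent of $\sigma_2$, with boundary of $d_i$ crossing it). The existence argument is the one already used in Observation~\ref{obser:Qunion}: walk up the cutting tree from $\sigma$ to the root $\Xi_0$; at the root the boundary of $d_i$ crosses the cell (the root is the entire plane), while at $\sigma$ it does not, so there must be a first ancestor whose parent's interior is crossed by the boundary of $d_i$. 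Calling that first ancestor $\sigma_2$ and its parent $\sigma_1$, we get $\sigma_2$ completely outside $d_i$ (since $\sigma \subseteq \sigma_2$ and $\sigma$ is outside $d_i$, but $\sigma_2$'s boundary is not crossed), hence $\sigma_2 \in L_4(d_i)$, while $\sigma_1 \in L_2(d_i)$ (its interior is crossed, and since $\sigma_1$ is a strict ancestor of the leaf $\sigma$, it lies above $\Xi_k$). Because $\sigma \subseteq \sigma_2$, we have $q \in Q(\sigma_2)\subseteq \bigcup_{\sigma'\in L_4(d_i)}Q(\sigma')$, as desired.

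The only subtlety worth verifying carefully is that in the second case $\sigma_1$ is indeed in $L_2(d_i)$ rather than $L_1(d_i)$; this holds because $\sigma_1$ is a proper ancestor of the leaf cell $\sigma$ and thus cannot itself be in the last cutting $\Xi_k$. I do not expect any real obstacle beyond this; the argument is a direct reflection of Observation~\ref{obser:Qunion} in which ``inside'' and ``outside'' of $d_i$ swap roles, taking advantage of the fact that a cell whose interior is not crossed by the boundary of $d_i$ is entirely on one side of that boundary.
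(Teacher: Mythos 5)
Your proposal is correct and is exactly the argument the paper intends: the paper's own proof of this observation simply states that it is the same as the proof of Observation~\ref{obser:Qunion} with $d_i$ replaced by $\overline{d_i}$ and $L_3(d_i)$ by $L_4(d_i)$, and your write-up carries out that mirrored two-direction containment argument (including the walk up the cutting tree to locate the crossed parent $\sigma_1\in L_2(d_i)$ and the fully-outside child $\sigma_2\in L_4(d_i)$) faithfully. The subtlety you flag about $\sigma_1$ not lying in $\Xi_k$ is handled the same way, implicitly, in the paper's proof of Observation~\ref{obser:Qunion}.
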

\begin{proof}
The proof is the same as that of Observation~\ref{obser:Qunion} except that we use $\overline{d_i}$ to replace $d_i$ and use $L_4(d_i)$ to replace $L_3(d_i)$. We omit the details. \qed
\end{proof}

Our algorithm for \reset\ works as follows. Consider a cell $\sigma\in L(d_i)$. As for the \findmin\ operation, depending on whether $\sigma$ is from $L_1(d_i)$ or $L_2(d_i)$, we process it in different ways. 

If $\sigma$ is from $L_1(d_i)$, we process $\sigma$ as follows. For each point $q\in Q(\sigma)$, if $q\in \overline{d_j}$, then we are supposed to reset $cost(q)$ to $w(q)+\delta_i$. To achieve the effect and also maintain the algorithm invariants, we do the following. First, we set $\lambda(q)=\delta_i-\sum_{\sigma'\in A(\sigma)}\lambda(\sigma')$, which can be done in $O(\log r)$ time by visiting the ancestors of $\sigma$. As such, we have $w(q)+\lambda(q)+\sum_{\sigma'\in A(\sigma)}\lambda(\sigma')=w(q)+\delta_i$, which establishes the first algorithm invariant for $q$. For the second algorithm invariant, we first update $minCost(\sigma)$ using the heap $\calH(\sigma)$, i.e., by updating the key of $q$ to the new value $w(q)+\lambda(q)$. The heap operation takes $O(\log m)$ time. Next, we update $minCost(\sigma')$ for all ancestors $\sigma'$ of $\sigma$ in a bottom-up manner using the formula $minCost(\sigma')=\min_{\sigma''\in C(\sigma')}(minCost(\sigma'')+\lambda(\sigma''))$. Since each cell has $O(1)$ children, updating all ancestors of $\sigma$ takes $O(\log r)$ time. This establishes the second algorithm invariant. Finally, since $\lambda(q)$ has just been changed, if $\lambda(q)\neq 0$, then we add $q$ to the list $\calL(\sigma')$ for all cells $\sigma'\in A(\sigma)$. Note that for each such $\calL(\sigma')$ it is possible that $q$ was already in the list before; but we do not check this and simply add $q$ to the end of the list (and thus the list may contain multiple copies of $q$). This finishes the processing of $q$, which takes $O(\log r+\log m)$ time. 
Processing all points of $q\in Q(\sigma)$ as above takes $O(|Q(\sigma)|\cdot (\log r+\log m))$ time. 

If $\sigma$ is from $L_2(d_i)$, then we process $\sigma$ as follows. For each child $\sigma'$ of $\sigma$, if $\sigma'$ is completely outside $d_i$, then we process $\sigma'$ as follows. We are supposed to reset $cost(q)$ to $w(q)+\delta_i$ for all points $q\in Q(\sigma')$. In other words, the first algorithm invariant does not hold any more and we need to update our data structure to restore it. Note that the second algorithm invariant still holds. 
To achieve the effect and also maintain the algorithm invariants, we do the following. 
For each element $e$ in the list $\calL(\sigma')$ (recall that $e$ is either a cell of $B(\sigma')$ or a point of $Q(\sigma')$), we process $e$ as follows. First, we remove $e$ from $\calL(\sigma')$. Then we reset $\lambda(e)=0$. If $e$ is a point of $Q(\sigma')$, then let $\sigma_e$ be the cell of $\Xi_k$ that contains $e$; otherwise, $e$ is a cell of $B(\sigma')$ and let $\sigma_e$ be the parent of $e$. Since $\lambda(e)$ is changed, we update $minCost(\sigma'')$ for all cells $\sigma''\in A(\sigma_e)$ in the same way as above in the first case for processing $L_1(d_i)$, which takes $O(\log r+\log m)$ time. This finishes processing $e$, after which the second algorithm invariant still holds. After all elements of $\calL(\sigma')$ are processed as above, $\calL(\sigma')$ becomes $\emptyset$ and we reset $\lambda(\sigma')=\delta_i-\sum_{\sigma''\in A(\sigma')\setminus\{\sigma'\}}\lambda(\sigma'')$. Since $\lambda(\sigma')$ has been changed, we update $minCost(\sigma'')$ for all cells $\sigma''\in A(\sigma)$ in the same way as before (which takes $O(\log r)$ time), after which the second algorithm invariant still holds. In addition, if $\lambda(\sigma')\neq 0$, then we add $\sigma'$ to the list $\calL(\sigma'')$ for all cells $\sigma''\in A(\sigma)$, which again takes $O(\log r)$ time. 
This finishes processing $\sigma'$, which takes $O(|\calL(\sigma')|\cdot (\log r+\log m))$ time. It remains to restore the first algorithm invariant, for which we have the following observation. 
\begin{observation}
After $\sigma'$ is processed, the first algorithm invariant is established for all points $q\in Q(\sigma')$.     
\end{observation}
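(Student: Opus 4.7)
The plan is to unpack, for an arbitrary $q\in Q(\sigma')$, the expression on the right-hand side of Algorithm Invariant 1 term by term, and to verify that after the processing it equals $w(q)+\delta_i$, which is the correct value of $cost(q)$ demanded by the \reset\ operation (since $\sigma'\subseteq \overline{d_i}$, so $q\in\overline{Q_{d_i}}$).

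First, fix $q\in Q(\sigma')$ and let $\sigma_q$ denote the cell of $\Xi_k$ containing $q$. Because $q\in Q(\sigma')$, the cell $\sigma'$ contains $q$, so $\sigma'$ is an ancestor of $\sigma_q$, i.e.\ $\sigma'\in A(\sigma_q)$. Partition $A(\sigma_q)$ into three disjoint pieces:
\[
A(\sigma_q)=\bigl(A(\sigma_q)\setminus (A(\sigma')\cup\{\sigma'\})\bigr)\;\cup\;\{\sigma'\}\;\cup\;\bigl(A(\sigma')\setminus\{\sigma'\}\bigr).
\]
The first piece consists of the strict ancestors of $\sigma_q$ that are strict descendants of $\sigma'$, hence a subset of $B(\sigma')$.

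Next, invoke the bookkeeping property of the lists $\calL(\cdot)$: whenever the algorithm assigns a nonzero value to $\lambda(e)$ for an element $e$ (a point of $Q$ or a cell of $\Xi$), it places $e$ into $\calL(\sigma'')$ for every strict ancestor $\sigma''$ of the enclosing $\Xi_k$-cell of $e$ (respectively, of $e$ itself). Consequently, just before we process $\sigma'$, every point of $Q(\sigma')$ with a nonzero $\lambda$ value and every cell of $B(\sigma')$ with a nonzero $\lambda$ value lies in $\calL(\sigma')$. During the processing, we reset $\lambda(e)=0$ for every $e\in\calL(\sigma')$, so after the loop we have $\lambda(q)=0$ and $\lambda(\tau)=0$ for every $\tau\in B(\sigma')$, in particular for every $\tau$ in the first piece of the partition above.

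Finally, observe that the processing leaves $\lambda(\tau)$ untouched for every $\tau\in A(\sigma')\setminus\{\sigma'\}$, and then sets
\[
\lambda(\sigma')=\delta_i-\sum_{\sigma''\in A(\sigma')\setminus\{\sigma'\}}\lambda(\sigma'').
\]
Combining the three pieces,
\begin{align*}
w(q)+\lambda(q)+\sum_{\tau\in A(\sigma_q)}\lambda(\tau)
&=w(q)+0+\sum_{\tau\in A(\sigma_q)\setminus(A(\sigma')\cup\{\sigma'\})}\!\!\!0\;+\;\lambda(\sigma')+\sum_{\sigma''\in A(\sigma')\setminus\{\sigma'\}}\lambda(\sigma'')\\
&=w(q)+\delta_i,
\end{align*}
which is exactly Algorithm Invariant~1 for $q$, with the required value $cost(q)=w(q)+\delta_i$.

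The only step that requires genuine care (and is where I would focus attention) is justifying that after the loop over $\calL(\sigma')$ every point of $Q(\sigma')$ and every descendant of $\sigma'$ indeed has $\lambda$-value zero. This rests on the global bookkeeping invariant of $\calL(\sigma')$ maintained by \emph{all} previous \reset\ operations, namely that nonzero $\lambda$-values are always recorded in the appropriate ancestors' lists; the rest of the argument is just the telescoping computation above.
\qed
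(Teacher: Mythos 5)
Your proposal is correct and follows essentially the same route as the paper's proof: both arguments reduce to observing that after the loop over $\calL(\sigma')$ all $\lambda$-values of points of $Q(\sigma')$ and of descendants of $\sigma'$ are zero (relying on the maintained bookkeeping invariant that nonzero $\lambda$-values are always recorded in the ancestors' lists), and then telescoping the sum over $A(\sigma_q)$ against the assignment $\lambda(\sigma')=\delta_i-\sum_{\sigma''\in A(\sigma')\setminus\{\sigma'\}}\lambda(\sigma'')$. Your explicit three-way partition of $A(\sigma_q)$ and your flagging of the list-completeness invariant as the delicate step merely make explicit what the paper leaves implicit.
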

\begin{proof}
Consider a point $q\in Q(\sigma')$. It suffices to show $w(q)+\delta_i=w(q)+\lambda(q)+\sum_{\sigma''\in A(\sigma_q)}\lambda(\sigma'')$, where $\sigma_q$ is the cell of $\Xi_k$ that contains $q$. After the elements of the list $\calL(\sigma')$ are processed as above, we have $\lambda(q)=0$ for all points $q\in Q(\sigma')$ and $\lambda(\sigma'')=0$ for all descendants $\sigma''$ of $\sigma'$. Therefore, $w(q)+\lambda(q)+\sum_{\sigma''\in A(\sigma_1)}\lambda(\sigma_1)=w(q)+\sum_{\sigma''\in A(\sigma')}\lambda(\sigma'')=w(q)+\lambda(\sigma')+\sum_{\sigma''\in A(\sigma')\setminus\{\sigma'\}}\lambda(\sigma'')$. Recall that $\lambda(\sigma')=\delta_i-\sum_{\sigma''\in A(\sigma')\setminus\{\sigma'\}}\lambda(\sigma'')$. We thus obtain $w(q)+\lambda(q)+\sum_{\sigma''\in A(\sigma_q)}\lambda(\sigma'')=w(q)+\delta_i$. \qed
\end{proof}

This finishes the \reset\ operation. According to Observation~\ref{obser:resetunion}, $cost(q)$ has been reset for all points $q\in Q$ that are outside $d_i$.
The following lemma analyzes the runtime of the operation. 

\begin{lemma}\label{lem:timeresetcost}
The total time of the \reset\ operations in the entire Algorithm~\ref{algo:20} is bounded by $O((nr+mn/r)\cdot \log r\cdot (\log r+\log m))$. 
\end{lemma}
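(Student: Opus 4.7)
The plan is to decompose the total cost of all \reset\ operations into three contributions: (i) the direct per-point work done when processing cells $\sigma\in L_1(d_i)$; (ii) the per-child overhead incurred when processing cells $\sigma\in L_2(d_i)$, namely resetting $\lambda(\sigma')$, propagating $minCost$ to ancestors, and inserting $\sigma'$ into $O(\log r)$ ancestor lists $\calL(\cdot)$; and (iii) the work spent handling elements already stored in the lists $\calL(\sigma')$ when such a $\sigma'$ is processed. By the algorithm description, each of these parts is easy to bound per-operation by a polylogarithmic factor times an appropriate size, so the task reduces to summing these sizes over the whole execution.

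For (i), I would mimic the accounting in Lemma~\ref{lem:timefindmin}. For each $\sigma\in L_1(d_i)$ the algorithm iterates over every $q\in Q(\sigma)$, and for each $q\in \overline{d_i}$ it performs $O(\log r+\log m)$ work (computing $\lambda(q)$ by visiting $A(\sigma)$, a heap update in $\calH(\sigma)$, and a bottom-up $minCost$ propagation). Summing over all iterations gives $\sum_{\sigma\in\Xi_k}|D_\sigma|\cdot|Q(\sigma)|\cdot(\log r+\log m)$, which is $O(mn/r\cdot(\log r+\log m))$ by $|D_\sigma|\le n/r$ and Observation~\ref{obser:partition}. For (ii), each $\sigma\in L_2(d_i)$ has $O(1)$ children, and per such child we spend $O(\log r)$ time (to set $\lambda(\sigma')$, to update $minCost$ on $O(\log r)$ ancestors, and to push $\sigma'$ into $O(\log r)$ ancestor lists); since $\sum_i|L_2(d_i)|\le\sum_i|L(d_i)|=O(nr)$, this contributes $O(nr\log r)$.

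The main obstacle is (iii), because the lists $\calL(\sigma')$ may contain multiple copies of the same element and the naive bound $|\calL(\sigma')|$ is not evidently controllable. I would handle this by a simple amortized argument: each element residing in some $\calL(\sigma')$ was placed there by exactly one prior insertion, and when $\sigma'$ is processed every element of $\calL(\sigma')$ is removed, so the total work charged to all $\calL$-list processings across the algorithm is at most $O(\log r+\log m)$ times the total number of insertions into $\calL$ lists ever performed. Insertions come from exactly two sources identified in the algorithm description: the $L_1$ phase inserts a processed point into the $O(\log r)$ ancestor lists of its containing $\Xi_k$-cell, and the $L_2$ phase inserts a processed child $\sigma'$ into its $O(\log r)$ ancestor lists. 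The number of point processings in the $L_1$ phase, summed over the algorithm, is $\sum_{\sigma\in\Xi_k}|D_\sigma|\cdot|Q(\sigma)|=O(mn/r)$, and the number of child processings in the $L_2$ phase is $O(\sum_i|L_2(d_i)|)=O(nr)$. Hence the total number of $\calL$-insertions is $O((mn/r+nr)\cdot\log r)$, and part (iii) contributes $O((nr+mn/r)\cdot\log r\cdot(\log r+\log m))$.

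Combining the three contributions, the direct $L_1$ cost and $L_2$ overhead are dominated by (iii), and the overall bound is $O((nr+mn/r)\cdot\log r\cdot(\log r+\log m))$, matching the lemma. The only delicate point in carrying this out is ensuring that every insertion into a $\calL$ list is attributed to one of the two sources above (so that no work on $\calL$ lists is missed in the amortization), which follows directly from reading off the two insertion rules stated in the description of \reset.
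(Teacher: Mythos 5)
Your proof is correct and follows essentially the same route as the paper: the direct $L_1$ work and per-child $L_2$ overhead are bounded exactly as in the paper, and your amortization of the $\calL$-list work over the total number of insertions ($O(mn/r\cdot\log r)$ from points, $O(nr\cdot\log r)$ from cells) is precisely the counting the paper performs when bounding the total sizes of $\calL_1(\sigma')$ and $\calL_2(\sigma')$. Your explicit insertion-charging phrasing is a slightly cleaner way to justify why duplicate copies in the lists cause no trouble, but the substance is identical.
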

\begin{proof}
Recall that we processes cells of $L_1(d_i)$ and cells of $L_2(d_i)$ in different ways. The total time of the operation is the sum of the time for processing $L_1(d_i)$ and the time for processing $L_2(d_i)$. 

For the time for processing $L_1(d_i)$, for each cell $\sigma\in L_1(d_i)$, recall that processing all points of $Q(\sigma)$ takes $O(|Q(\sigma)|\cdot (\log r+\log m))$ time. Hence, the total time for processing cells of $L_1(d_i)$ is on the order of $\sum_{\sigma\in L_1(d_i)}|Q(\sigma)|\cdot (\log r+ \log m)$. The total time for processing cells of $L_1(d_i)$ in the entire Algorithm~\ref{algo:20} is thus on the order of $\sum_{i=1}^n \sum_{\sigma\in L_1(d_i)}|Q(\sigma)|\cdot (\log r+ \log m)$. 
As analyzed in the proof of Lemma~\ref{lem:timefindmin}, $\sum_{i=1}^n \sum_{\sigma\in L_1(d_i)}|Q(\sigma)|=O(mn/r)$. Therefore, the total time for processing cells of $L_1(d_i)$ in the entire Algorithm~\ref{algo:20} is $O(mn/r\cdot (\log r+\log m))$. 

For the time for processing $L_2(d_i)$, for each cell $\sigma\in L_2(d_i)$, for each child $\sigma'$ of $\sigma$, processing $\sigma'$ takes $O(|\calL(\sigma')|\cdot (\log r+\log m))$ time. 
Next, we give an upper bound for $|\calL(\sigma')|$ for all such cells $\sigma'$ in the entire algorithm. Recall that each element $e$ of $\calL(\sigma')$ is either a point $q\in Q(\sigma')$ or a descendant cell $\sigma''\in B(\sigma')$. Let $\calL_1(\sigma')$ denote the subset of elements of $\calL(\sigma')$ in the former case and $\calL_2(\sigma')$ the subset of elements in the latter case. In the following we provide an upper bound for each subset. 
\begin{enumerate}
    \item 
For $\calL_1(\sigma')$, notice that a point $q$ is in the list only if $\sigma_q$ is crossed by the lower arc of a disk $d_i$, where $\sigma_q$ is the cell of $\Xi_k$ containing $q$. If $q$ is outside $d_i$, then a copy of $q$ will be added to $\calL(\sigma'')$ for all $O(\log r)$ cells $\sigma''$ of $A(\sigma_q)$. As $|D_{\sigma_{q}}|\leq n/r$, the number of elements in $\calL_1(\sigma')$ contributed by the points of $Q(\sigma_{q})$ for all such cells $\sigma'$ in the entire algorithm is bounded by $O(|Q(\sigma_{q})|\cdot n/r\cdot \log r)$. 
In light of Observation~\ref{obser:partition}, the total size of $\calL_1(\sigma')$ of all such cells $\sigma'$ in the entire Algorithm~\ref{algo:20} is $O(mn/r\cdot \log r)$. 

\item 
For $\calL_2(\sigma')$, observe that a cell $\sigma_1$ is in the list only if $\sigma_2$ is crossed by the lower arc of a disk $d_i$, where $\sigma_2$ is the parent of $\sigma_1$. If $\sigma_1$ is completely outside $d_i$, then a copy of $\sigma_1$ is added to $\calL(\sigma'')$ for all $O(\log r)$ cells $\sigma''$ of $A(\sigma_2)$. As such, the number of elements in $\calL_2(\sigma')$ for all such cells $\sigma'$ in the entire algorithm contributed by each cell $\sigma_1\in \Xi$ is bounded by $O(|D_{\sigma_2}|\cdot \log r)$. Since every cell of $\Xi$ has $O(1)$ children and $\sum_{\sigma_2\in \Xi}|D_{\sigma_2}|=O(nr)$, the total size of $\calL_2(\sigma')$ of all such cells $\sigma'$ in the entire Algorithm~\ref{algo:20} is $O(nr\cdot \log r)$. 
\end{enumerate}

Therefore, the total time for processing cells of $L_2(d_i)$ in the entire Algorithm~\ref{algo:20} is $O((nr+mn/r)\cdot \log r\cdot (\log r+\log m))$. 

Summing up the time for processing $L_1(d_i)$ and $L_2(d_i)$ leads to the lemma.\qed
\end{proof}

\subsection{Putting it all together}
\label{sec:summary}

We summarize the time complexity of the overall algorithm. By Lemma~\ref{lem:timepre}, the preprocessing step takes $O(n\log n+nr+m\log r)$ time. By Lemma~\ref{lem:timefindmin}, the total time for performing the \findmin\ operations in the entire algorithm is $O((nr+mn/r)\cdot\log r)$. By Lemma~\ref{lem:timeresetcost}, the total time for performing the \reset\ operations in the entire algorithm is $O((nr+mn/r)\cdot \log r\cdot (\log m+\log r))$. Therefore, the total time of the overall algorithm is $O(n\log n+m\log r+(nr+mn/r)\cdot \log r\cdot (\log m+\log r))$. Recall that $1\leq r\leq n$. Setting $r=\min\{\sqrt{m},n\}$ gives the upper bound $O(n\sqrt{m}\log^2 m + (n+m)\log(n+m))$ for the time complexity of the overall algorithm. 

Note that we have assumed that each point of $P$ is covered by at least one disk of $S$. In this is not the case, then no feasible subset exists (alternatively, one may consider the optimal objective value $\infty$); if we run our algorithm in this case, then one can check that the value $\delta_n$ returned by our algorithm is $\infty$. Hence, our algorithm can automatically determine whether a feasible subset exists.\footnote{It is possible to determine whether a feasible subset exists in $O(n\log n)$ time. For example, one can first compute the upper envelope of the boundary portions of all disks above $\ell$. Then, it suffices to determine whether every point of $P$ is below the upper envelope. Note that the upper envelope is $x$-monotone.}

\begin{theorem}\label{theo:coverage}
Given a set of $n$ points and a set of $m$ weighted unit disks in the plane such that the points and the disk centers are separated by a line, there is an $O(n\sqrt{m}\log^2 m + (n+m)\log(n+m)) $  time algorithm to compute a subset of disks of minimum total weight whose union covers all points. 
\end{theorem}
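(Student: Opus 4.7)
The plan is to use Algorithm~\ref{algo:20}, the dual reformulation of Algorithm~\ref{algo:10}, supported by the hierarchical cutting data structure of Section~\ref{sec:implementation}. Concretely, I would first argue correctness: Algorithm~\ref{algo:10} returns $\delta_n=\optvalue$ by the induction established in Section~\ref{sec:correct} (matching its iterates to those of the Pederson--Wang interval dynamic program), and Observation~\ref{obser:dual} shows that Algorithm~\ref{algo:20} performs exactly the same computation on the dualized instance, so its output also equals $\optvalue$. The remaining work is therefore purely about bounding the running time of the data-structure-driven implementation.

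Next I would assemble the time bound from the three components developed in Sections~\ref{sec:preprocess}--\ref{sec:summary}. The preprocessing (sort $P$, build the hierarchical $(1/r)$-cutting $\Xi$ of $H$, compute conflict lists $L(d_i)$, locate each dual point in $\Xi_k$, and initialize $\lambda,\minCost,\calH$) costs $O(n\log n+nr+m\log r)$ by Lemma~\ref{lem:timepre}. All \findmin\ operations together cost $O((nr+mn/r)\log r)$ by Lemma~\ref{lem:timefindmin}, using that each cell in $L_1(d_i)$ contributes $|Q(\sigma)|$ point-checks with $|D_\sigma|\le n/r$ (yielding the $mn/r$ term via Observation~\ref{obser:partition}) while $\sum_i|L_2(d_i)|=O(nr)$. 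All \reset\ operations together cost $O((nr+mn/r)\cdot\log r\cdot(\log r+\log m))$ by Lemma~\ref{lem:timeresetcost}, where the extra $\log m$ factor comes from heap updates in $\calH(\sigma)$ and the crucial amortized bound on $\sum|\calL(\sigma')|$ split into point- and cell-contributions.

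Summing these three bounds gives total time $O(n\log n+m\log r+(nr+mn/r)\cdot\log r\cdot(\log m+\log r))$. The only remaining step is to balance $r$: since $1\le r\le n$, I would set $r=\min\{\sqrt{m},n\}$, which equalizes $nr$ and $mn/r$ when $m\le n^2$ and otherwise caps $r$ at $n$; in either case the dominant term becomes $n\sqrt{m}\log^2 m$, and the preprocessing contributes the additive $(n+m)\log(n+m)$, giving the claimed $O(n\sqrt{m}\log^2 m+(n+m)\log(n+m))$ bound.

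Finally, I would address the standing assumption that every point of $P$ is covered by at least one disk. If this fails, then $\delta_n=\infty$ and the algorithm detects infeasibility automatically; alternatively, the footnote in Section~\ref{sec:summary} describes an $O(n\log n)$ upper-envelope check that can be run beforehand. The main technical obstacle in the whole plan is the amortized accounting for \reset, specifically bounding the total size of the lists $\calL(\sigma')$ across the algorithm: this is where one must carefully exploit both $|D_\sigma|\le n/r$ (for the point-contributions, combined with Observation~\ref{obser:partition}) and $\sum_\sigma|D_\sigma|=O(nr)$ (for the cell-contributions), each inflated by the $O(\log r)$ height of the cutting hierarchy. Everything else is a routine combination of the lemmas already in place.
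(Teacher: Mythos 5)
Your proposal follows the paper's own proof essentially verbatim: correctness via the induction matching Algorithm~\ref{algo:10} to the Pederson--Wang interval dynamic program (transferred to Algorithm~\ref{algo:20} by Observation~\ref{obser:dual}), and the running time by summing Lemmas~\ref{lem:timepre}, \ref{lem:timefindmin}, and \ref{lem:timeresetcost} and setting $r=\min\{\sqrt{m},n\}$. This is exactly the argument in Section~\ref{sec:summary}, including the infeasibility check, so there is nothing to add.
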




\bibliographystyle{plainurl}

\end{document}